\newif\ifDRAFT 
\theoremstyle{plain}
\newtheorem{theorem}{Theorem}
\newtheorem{lemma}[theorem]{Lemma}
\newtheorem{observation}[theorem]{Observation}
\newtheorem{claim}[theorem]{Claim}
\theoremstyle{definition}
\newtheorem{definition}{Definition}
\newtheorem{problem}{Problem}
\crefname{equation}{Eqn.}{Eqns.}
\DeclareSymbolFont{bbold}{U}{bbold}{m}{n}
\DeclareSymbolFontAlphabet{\mathbbold}{bbold}
\newcommand{\IGNORE}[1]{}
\newcommand{\depth}{\operatorname{depth}}
\newcommand{\LCA}{\operatorname{LCA}}
\newcommand{\FLCA}{\operatorname{FLCA}}
\newcommand{\covers}{\succcurlyeq}
\newcommand{\lca}{\mathtt{lca}}
\newcommand{\anc}{\mathtt{anc}}
\newcommand{\undefined}{\mathtt{undefined}}
\title{Fault-Equivalent Lowest Common Ancestors}
\author{
    Asaf Petruschka%
    \thanks{Supported by an Azrieli Foundation fellowship, and by Merav Parter's grant from the European Research Council (ERC) under the European Union’s Horizon 2020 research and
    innovation programme, agreement No. 949083.}\\
    Weizmann Institute of Science\\
    \texttt{asaf.petruschka@weizmann.ac.il}
}
\date{}
\begin{document}

\maketitle

\begin{abstract}
    Let $T$ be a rooted tree in which a set $M$ of vertices are marked.
    The \emph{lowest common ancestor (LCA)} of $M$ is the unique vertex $\ell$ with the following property: after failing (i.e., deleting) any single vertex $x$ from $T$, the root remains connected to $\ell$ if and only if it remains connected to some marked vertex.
    In this note, we introduce a generalized notion called \emph{$f$-fault-equivalent LCAs ($f$-FLCA)}, obtained by adapting the above view to $f$ failures for arbitrary $f \geq 1$.
    We show that there is a unique vertex set $M^* = \FLCA(M,f)$ of minimal size such after the failure of any $f$ vertices (or less), the root remains connected to some $v \in M$ iff it remains connected to some $u \in M^*$.
    Computing $M^*$ takes linear time.
    A bound of $|M^*| \leq 2^{f-1}$ always holds, regardless of $|M|$, and holds with equality for some choice of $T$ and $M$.
\end{abstract}

\section{Introduction and Results}

Consider the following motivating problem.
There is an $n$-vertex tree $T$, rooted at a source $s$.
In this tree, a nonempty and possibly large vertex subset of interest $M \subseteq V(T)$ is \emph{marked}.
We are preparing for the future \emph{failures} (or \emph{faults}) of at most $f$ currently unknown vertices, which will be deleted from the tree.
(A faulty vertex may or may not be marked.)
After the failures $F \subseteq V(T)$ occur, we will be interested to understand whether they cause $s$ to disconnect from all the marked vertices.
Namely, we will want to answer the following question: Is there some marked vertex $v \in M$ that remains reachable from $s$ in $T - F$?

However, we would like to save on memory costs, and avoid storing the entire set $M$.
Instead, we want to preprocesses $M$ to find and store a smaller ``representative'' set of vertices $M^* \subseteq V(T)$, which is equivalent to $M$ in terms of the above question.
Namely, for every fault-set $F \subseteq V(T)$ with $|F| \leq f$, all vertices in $M^*$ are disconnected from $s$ in $T-F$ if and only if this is true for the original marked set $M$.

We now introduce some definitions to formalize the above.
First, we define the \emph{covering} relation.

\begin{definition}\label{def:cover}
    For two vertex sets $A, B \subseteq V(T)$, we say that $A$ \emph{covers} $B$, and denote $A \covers B$, if for every $b \in B$ there exists $a \in A$ which is an ancestor of $b$ in $T$.
\end{definition}

Note that the failure of $F \subseteq V(T)$ disconnects $s$ from all of $M$ iff $F \covers M$.
Our requirement from the representative set $M^*$ is that this should happen iff $F \covers M^*$, whenever $|F| \leq f$.
We therefore define the \emph{$f$-fault equivalence} relation.

\begin{definition}
    For $f \geq 1$, we say that two vertex sets $M,N \subseteq V(T)$ are \emph{$f$-fault-equivalent}, and denote $M \sim_f N$, if for any $F \subseteq V(T)$ with $|F| \leq f$, it holds that $F \covers M$ iff $F \covers N$.
\end{definition}

Our motivating problem can now be succinctly stated as follows:

\begin{problem}\label{problem}
    Given the tree $T$, the marked set $\emptyset \neq M \subseteq V(T)$ and a fault parameter $f \geq 1$, find a set $M^* \subseteq V(T)$ of minimal size such that $M^* \sim_f M $.
\end{problem}

\paragraph{Relation to Lowest Common Ancestors.}
When preparing for a single vertex failure, i.e., when $f = 1$, a moment's reflection will show that one can always choose $M^*$ having only one vertex: the \emph{lowest common ancestor (LCA)} of all marked vertices, denoted $\LCA(M)$.
Indeed, a single failed vertex $v$ disconnects all of $M$ from $s$ iff $\{v\} \covers M$, namely iff $v$ is a common ancestor of all marked vertices, which happens iff $v$ is an ancestor of $\LCA(M)$, i.e., iff $\{v\} \covers \{\LCA(M)\}$.
In fact, $\LCA(M)$ is the \emph{only} single vertex satisfying this property.
This means that $\LCA(M)$ could be equivalently defined by the unique optimal solution to \Cref{problem} with $f=1$.

Thus, letting $f$ increase beyond $1$ yields a generalized notion of LCA given by the optimal solution to \Cref{problem}.
Also, as we will show, the optimal solution $M^*$ is always unique, and consists of LCAs of subsets of $M$.
For these reasons, we call $M^*$ the \emph{$f$-fault LCA} of $M$, and denote it by $\FLCA(M, f)$.
(So, by the above discussion, $\FLCA(M,1) = \{\LCA(M)\}$.)

\paragraph{Results.}
In this note, we give a simple algorithm to compute $\FLCA(M,f)$, the optimal solution for \Cref{problem}, and answer a natural question of interest (given our "memory savings" motivation): how small can $\FLCA(M,f)$ be?
When $f = 1$, we saw it has size $1$, regardless of how large $M$ is. 
This extends to a bound of $2^{f-1}$ on the size of $\FLCA(M,f)$, which is worst-case optimal
(i.e., for some choice of $T,M$, $|\FLCA(M,f)| = 2^{f-1}$).
Thus, when the fault parameter $f$ is constant, we can represent any marked vertex set $M$ (in the sense of $f$-fault equivalence) using only a constant number of representative vertices.
Our results are summarized in the following theorem.
\begin{theorem}\label{thm:FLCA}
    Let $T$ be an $n$-vertex tree rooted at vertex $s$, $M \subseteq V(T)$ be a non-empty set of marked vertices, and $f \geq 1$.
    The following hold:
    \begin{enumerate}
		\item There is a unique set $\FLCA(M,f)$ having minimal size among all the $f$-fault-equivalent sets to $M$, namely among $\{N \subseteq V(T) \mid N \sim_f M\}$. 
		\item It holds that $|\FLCA(M,f)| \leq 2^{f-1}$, and this bound is tight.
        That is, for some choice of $T$ and $M$, this holds with equality.
		\item There is an $O(n)$ time algorithm to compute $\FLCA(M,f)$ given $T$, $M$ and $f$.
        Further, after $O(n)$ time for preprocessing $T$, one can compute $\FLCA(M,f)$ within $O(|M|)$ time.
    \end{enumerate}
\end{theorem}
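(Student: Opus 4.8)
\emph{Proof proposal.} The plan is to describe $\FLCA(M,f)$ by a recursion on $\ell := \LCA(M)$ and verify all three claims from it. Write $T_u$ for the subtree rooted at $u$; let $c_1,\dots,c_k$ be the children of $\ell$ whose subtrees meet $M$, and $M_i := M \cap T_{c_i}$ (so $M = M_1\cup\dots\cup M_k$ and $k \ge 2$ when $\ell \notin M$). Call a fault-set $F$ \emph{top-hitting} if it contains an ancestor of $\ell$ (possibly $\ell$ itself); any top-hitting $F$ covers every vertex of $T_\ell$, hence covers $M$, and in fact $F$ is top-hitting iff $F \covers \{\ell\}$. I would prove, by induction on $|M|$, that $\FLCA(M,f) = \{\ell\}$ when $\ell\in M$ or $k > f$, and otherwise $\FLCA(M,f) = \bigcup_{i=1}^{k}\FLCA(M_i,f-k+1)$, a \emph{disjoint} union in which each term lies in $T_{c_i}$ (this containment is the inductive invariant, as by the recursion every element of an $\FLCA$ set is the LCA of a subset of the corresponding marked set). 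In each case two things must be checked: that the displayed set is $\sim_f M$, and that it is the unique size-minimizer among $\{N : N \sim_f M\}$. A fact used throughout: $N \sim_f M$ implies $\LCA(N)=\ell$, since testing $M$ and $N$ against the singleton fault-sets $\{v\}$ shows $v$ is an ancestor of $\ell$ iff it is an ancestor of $\LCA(N)$. Hence any such $N$ is nonempty and contained in $T_\ell$, so $|N|\ge 1$ with equality only if $N=\{\ell\}$; combined with the observations that (i) if $\ell\in M$ then $F\covers M$ forces $F$ to cover $\ell$, i.e.\ to be top-hitting, and (ii) if $k>f$ then a fault-set of size $\le f$ that is not top-hitting cannot reach all $k$ subtrees $T_{c_i}$ (once top ancestors are excluded, a vertex of $M_i$ is coverable only from inside $T_{c_i}$), this settles the two non-recursive cases, where $F\covers M$ iff $F$ is top-hitting iff $F\covers\{\ell\}$, so $M\sim_f\{\ell\}$.

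The recursive case $\ell\notin M$, $2\le k\le f$ is the crux. Write $M^* := \bigcup_i \FLCA(M_i,f-k+1)$. For $M^*\sim_f M$: a fault-set $F$ with $|F|\le f$ that is top-hitting covers both $M$ and $M^*$; if $F$ is not top-hitting, then with $F_i := F\cap T_{c_i}$ one has $F\covers M$ iff $(\forall i)\,F_i\covers M_i$, and likewise $F\covers M^*$ iff $(\forall i)\,F_i\covers \FLCA(M_i,f-k+1)$, and the decisive point is that whenever either side holds then $|F_i|\ge 1$ for every $i$, so $|F_i| \le |F|-(k-1) \le f-k+1$ and the inductive equivalence $M_i\sim_{f-k+1}\FLCA(M_i,f-k+1)$ transfers termwise. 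For minimality and uniqueness, take $N\sim_f M$, so $\LCA(N)=\ell$ and $N\subseteq T_\ell$. Probing with $F_0 := \{\LCA(M_1),\dots,\LCA(M_k)\}$ --- which covers $M$ using only $k\le f$ faults and is not top-hitting --- forces $\ell\notin N$ and forces $N$ to avoid the subtrees of all children of $\ell$ other than $c_1,\dots,c_k$; probing with $\{\LCA(N_j):j\ne i\}$ forces each $N_i := N\cap T_{c_i}$ to be nonempty. Setting $p_j := \LCA(M_j\cup N_j)\in T_{c_j}$ (which covers $M_j$ and $N_j$ at once), probing with $\{v\}\cup\{p_j:j\ne i\}$ over all $v\in T_{c_i}$ yields $\LCA(N_i)=\LCA(M_i)$, and probing with $F_i\cup\{p_j:j\ne i\}$ over all $F_i\subseteq T_{c_i}$ with $|F_i|\le f-k+1$ yields $N_i\sim_{f-k+1}M_i$. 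By induction $|N_i|\ge|\FLCA(M_i,f-k+1)|$ with equality iff $N_i=\FLCA(M_i,f-k+1)$, and summing over the disjoint $T_{c_i}$ gives $|N|\ge|M^*|$ with equality iff $N=M^*$. This is the step I expect to be the main obstacle: one must engineer fault-sets that isolate a single child-subtree while making the budget split $f\mapsto f-k+1$ provably forced, and the auxiliary vertices $p_j=\LCA(M_j\cup N_j)$ are precisely what let a single fault-set behave identically toward $M$ and toward $N$ in all the other subtrees.

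The size bound then follows from the recursion: letting $g(f)$ be the largest possible $|\FLCA(M,f)|$, we get $g(1)=1$, and for $f\ge 2$ either $|\FLCA(M,f)|=1$ or $|\FLCA(M,f)|\le k\cdot g(f-k+1)$ for some $2\le k\le f$; both are $\le 2^{f-1}$ by induction, using $k\le 2^{k-1}$ for all $k\ge 2$. Tightness comes from taking $T$ the complete binary tree of depth $f-1$ with all $2^{f-1}$ leaves marked, so that $\ell$ is the root, $k=2$ at every step of the recursion and it never terminates early, whence $\FLCA(M,f)$ is exactly the leaf set. For the algorithm, the recursion is directly implementable. One $O(n)$-time DFS records, for each vertex, whether its subtree meets $M$ and whether the vertex itself is in $M$; then $\textsc{Solve}(u,b)$, returning $\FLCA(M\cap T_u,b)$, outputs $\{u\}$ if $u\in M$ or if $u$ has more than $b$ children whose subtrees meet $M$, recurses into the unique such child with budget $b$ if there is exactly one (merely walking down to the relevant LCA), and otherwise recurses into all $k\ge 2$ such children with budget $b-k+1$; since recursive calls descend into pairwise-disjoint subtrees and the work at each visited vertex is $O(\deg)$, this runs in $O(n)$. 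For the $O(|M|)$ bound, preprocess $T$ in $O(n)$ time for $O(1)$-query LCA together with a DFS numbering, build the compressed (virtual) tree $T_M$ of $M$ in $O(|M|)$ time --- it has $O(|M|)$ nodes, its root is $\LCA(M)$, every internal node has $\ge 2$ children, and a node has a marked descendant iff it belongs to $T_M$ --- and run the same routine on $T_M$, where the ``exactly one child'' case never arises.
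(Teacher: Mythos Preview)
Your argument is essentially the paper's: the same recursion on $\ell=\LCA(M)$ with the same two base cases ($\ell\in M$ or $k>f$) and the same recursive call with budget $f-k+1$; the paper proves equivalence, uniqueness/minimality, and the $2^{f-1}$ bound by the same inductive structure (it inducts on $f$ rather than $|M|$, but either works). Your uniqueness step is slightly more elaborate than needed: where you introduce $p_j=\LCA(M_j\cup N_j)$, the paper simply uses the children $c_j$ themselves, probing with $F'\cup\{c_j:j\neq i\}$ for \emph{arbitrary} $F'\subseteq V(T)$ of size $\le f-k+1$ and reading off $N_i\sim_{f-k+1}M_i$ directly (this also avoids your implicit reduction from general $F'$ to $F'\cap T_{c_i}$). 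One small slip: the probe ``$\{\LCA(N_j):j\ne i\}$'' to force $N_i\neq\emptyset$ is circular if some $N_j$ is empty; use $\{c_j:j\ne i\}$ instead.

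The genuinely different piece is your $O(|M|)$ query implementation. The paper builds $O(1)$-time LCA \emph{and} level-ancestor oracles, then at each recursive call recomputes $\ell$ by folding pairwise LCAs and groups $M$ into the $M_i$ via level-ancestor queries plus a global lookup table. Your alternative---build the virtual tree $T_M$ once and run the recursion there, where every internal node already has $\ge 2$ children---is cleaner and touches each of the $O(|M|)$ virtual-tree nodes once, so the recursion itself is $O(|M|)$. The catch is constructing $T_M$: the standard build sorts $M$ by DFS order, which is $O(|M|\log|M|)$ unless you assume linear-time integer sorting; you should say explicitly how you get $O(|M|)$ here (or accept $O(|M|\log|M|)$).
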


\paragraph{Edge Faults.}
We remark that considering failures of \emph{edges} instead of vertices, or even allowing a mixture of failing vertices and edges, does not change our results regarding $\FLCA(M,f)$.
To state this explicitly:
$\FLCA(M,f)$ is the unique vertex set $M^* \subseteq V(T)$ having minimal size such that for every $F \subseteq V(T) \cup E(T)$ of size $|F| \leq f$, in $T-F$ it holds that the root $s$ is connected to some $v \in M$ iff it is connected to some $u \in M^*$.
This is due to the fact that, in terms of connectivity to the root, the failure of an edge in a tree has the same effect as the failure of its lower endpoint.

\paragraph{Aggregation.}
It is easy to prove that the function $\varphi (\cdot) = \FLCA(\cdot, f)$ admits the following nice \emph{aggregation} property: $\varphi(A \cup B) = \varphi (A \cup \varphi(B))$.
Such aggregation properties are often exploited for efficient computations.
As a ``toy example'', suppose the marked set $M$ is revealed to us over time in batches $M_1, M_2, M_3, \dots$. 
Then we can save on memory in the time between batches $t$ and $t+1$, only (at most $2^{f-1}$) vertices in $M^*_t = \FLCA(M_1 \cup \cdots \cup M_t, f)$.
When $M_{t+1}$ arrives, we can use the aggregation property and compute $M^*_{t+1}$ as $\FLCA(M^*_t \cup M_{t+1}, f)$.

\paragraph{Potential Applications.}
The notions and results presented in this note were developed during research on fault-tolerant graph data structures, but eventually did not make their way into the final solutions.
Still, the author believes they could be of potential use in the field of fault-tolerant graph structures and algorithms, and hopes such applications would be found in the future.

\section{Proof of \Cref{thm:FLCA}}

The proof is by analyzing the following algorithm for computing $\FLCA(M,f)$.
The notation $T_v$ stands for the subtree of $T$ rooted at vertex $v$.

\begin{algorithm}[H]
    \caption{Algorithm $\mathcal{A}$ for computing $\FLCA(M,f)$}\label{alg:FLCA}
    \textbf{Input:} Rooted tree $T$, non-empty vertex set $M \subseteq V(T)$, integer $f \geq 1$ \\
    \textbf{Output:} Vertex set $\mathcal{A}(T,M,f) \subseteq V(T)$
    \begin{algorithmic}[1]
    \State $\ell \gets \LCA(M)$
    \If{$\ell \in M$} \label{line:l-in-M}
        \Return $\{\ell\}$
    \EndIf
    \State $u_1 , \dots, u_d \gets$ the children of $\ell$ with $M \cap V(T_{u_i}) \neq \emptyset$  \Comment{Note:\ $d \geq 2$ as $\LCA(M) = \ell \notin M$} \label{line:children-of-l}
    \If{$d > f$} \label{line:d-vs-f}
        \Return $\{\ell\}$
    \EndIf
    \State $M_1, \dots, M_d \gets M \cap V(T_{u_1}), \dots, M \cap V(T_{u_k})$
    \State \Return $\bigcup_{i=1}^d \mathcal{A}(T, M_i, f-d+1)$ \Comment{Note:\ $1 \leq f-d+1 \leq f-1$} \label{line:recurse}
    \end{algorithmic}
\end{algorithm}

We divide the proof into several claims regarding algorithm $\mathcal{A}$.
All of them are proved by strong induction on $f$. We denote the output as $M^* = \mathcal{A}(T,M,f)$, and in case \Cref{line:recurse} is executed, we also denote $M^*_i = \mathcal{A}(T, M_i, f-d+1)$.

Throughout, we will use extensively the following easy-to-observe properties of the covering relation $\covers$ from \Cref{def:cover}, without explicitly stating them.
The notation $T[u,v]$ stands for the (unique) tree path between vertices $u$ and $v$.

\begin{observation}\label{obs:cover}
	The covering relation $\covers$ from \Cref{def:cover} has the following properties:
	\begin{enumerate}
		\item It is reflexive and transitive. (Namely, $\covers$ is a \emph{preorder}.)
		\item $A \covers B \iff B \subseteq \bigcup_{a \in A} V(T_a) \iff \forall b \in B, \, A \cap T[s,b] \neq \emptyset$.
		\item If $A \covers B$, then for every $B' \subseteq B$ it holds that $A \covers B'$ and $\{\LCA(A)\} \covers \{\LCA(B')\}$.
		\item If $A_i \covers B_i$ for all $i$, then $\bigcup_i A_i \covers \bigcup_i B_i$.
		\item Assume $A \covers B$ and $B \subseteq V(T_v)$. Let $A'$ be any subset of $A$ obtained by removing some vertices lying outside of $T_v \cup T[s,v]$. Then $A' \covers B$.
	\end{enumerate}
\end{observation}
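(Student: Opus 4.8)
The plan is to derive all five properties from a single underlying fact: in a rooted tree, the three statements ``$a$ is an ancestor of $b$'', ``$b \in V(T_a)$'', and ``$a \in T[s,b]$'' are mutually equivalent, together with the transitivity of the ancestor relation. Accordingly, I would prove property~2 first, as it is little more than a restatement of \Cref{def:cover}. Expanding ``$A \covers B$'' gives ``for every $b \in B$ there is $a \in A$ that is an ancestor of $b$''; replacing the inner clause by ``$b \in V(T_a)$'' yields the containment $B \subseteq \bigcup_{a \in A} V(T_a)$, and replacing it by ``$a \in T[s,b]$'' (the root-to-$b$ path is exactly the set of ancestors of $b$) yields $A \cap T[s,b] \neq \emptyset$ for every $b$. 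The three forms are literal translations of one another.

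Given this, properties~1, the first half of~3, and~4 are immediate. For reflexivity, each $a \in A$ is its own ancestor, so $A \covers A$; for transitivity, chaining $A \covers B$ and $B \covers C$ produces, for each $c \in C$, an ancestor $b \in B$ of $c$ and then an ancestor $a \in A$ of $b$, and since the ancestor relation on a tree is transitive, $a$ is an ancestor of $c$, giving $A \covers C$. For the first claim of property~3, every $b' \in B' \subseteq B$ already has an ancestor in $A$, so $A \covers B'$. For property~4, any $b \in \bigcup_i B_i$ lies in some $B_j$, hence has an ancestor in $A_j \subseteq \bigcup_i A_i$, so $\bigcup_i A_i \covers \bigcup_i B_i$.

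The two steps that call for a genuine (though short) argument are the $\LCA$ claim in property~3 and property~5. For the $\LCA$ claim, I would argue that $\LCA(A)$ is a common ancestor of $B'$: each $b' \in B'$ has an ancestor $a \in A$, and $\LCA(A)$ is an ancestor of that $a$, so by transitivity $\LCA(A)$ is an ancestor of $b'$; since every common ancestor of $B'$ lies on $T[s, \LCA(B')]$, we conclude that $\LCA(A)$ is an ancestor of $\LCA(B')$, i.e.\ $\{\LCA(A)\} \covers \{\LCA(B')\}$. I expect property~5 to be the main obstacle, and its crux is the observation that any $a \in A$ witnessing the covering of some $b \in B \subseteq V(T_v)$ must already lie in $T_v \cup T[s,v]$: because $v$ is an ancestor of $b$, the set of ancestors of $b$ decomposes as $T[s,b] = T[s,v] \cup T[v,b]$ with $T[v,b] \subseteq V(T_v)$, so every ancestor of $b$ is in $T_v \cup T[s,v]$. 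Such a witness is therefore never among the removed vertices (which all lie outside $T_v \cup T[s,v]$), so it survives in $A'$; as this holds for each $b \in B$, we obtain $A' \covers B$.
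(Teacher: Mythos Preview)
Your proof is correct. The paper itself does not supply a proof of this observation; it is stated as an ``easy-to-observe'' fact and used freely throughout. Your write-up fills in exactly the details one would expect: the reformulations in property~2 follow from the standard equivalence between ``$a$ is an ancestor of $b$'', ``$b \in V(T_a)$'', and ``$a \in T[s,b]$''; properties~1, the first half of~3, and~4 are then immediate; and your arguments for the $\LCA$ claim in~3 and for property~5 (via the decomposition $T[s,b] = T[s,v] \cup T[v,b]$ when $b \in V(T_v)$) are the natural ones. There is nothing to compare against, and nothing to correct.
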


We start with an auxiliary lemma, stating that the output $M^*$ must lie between $\ell = \LCA(M)$ and $M$ in terms of covering.
\begin{lemma}\label{cl:between-ell-and-M}
	$\{\ell\} \covers M^* \covers M$.
\end{lemma}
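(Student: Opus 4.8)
The plan is to prove both inclusions at once by strong induction on $f$, mirroring the recursive structure of algorithm $\mathcal{A}$, with a case split on whether $\mathcal{A}$ recurses. If $\mathcal{A}$ returns early — that is, if $\ell \in M$, or $\ell \notin M$ but $d > f$ — then $M^* = \{\ell\}$, and there is nothing to do: $\{\ell\} \covers M^*$ is reflexivity, while $\{\ell\} \covers M$ holds because $\ell = \LCA(M)$ is an ancestor of every vertex of $M$. This also disposes of the base case $f = 1$, since then $\ell \notin M$ forces $d \geq 2 > 1 = f$, so $\mathcal{A}$ always returns early.

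In the recursive case we have $\ell \notin M$ and $d \le f$, so $M^* = \bigcup_{i=1}^d M^*_i$ with $M^*_i = \mathcal{A}(T, M_i, f-d+1)$ and $1 \le f-d+1 \le f-1 < f$; hence the induction hypothesis applies to each recursive call, giving $\{\ell_i\} \covers M^*_i \covers M_i$ where $\ell_i = \LCA(M_i)$. I would then glue these together with two remarks. First, since $\ell = \LCA(M) \notin M$, every marked vertex is a strict descendant of $\ell$, hence lies in $T_{u_i}$ for one of the children $u_1,\dots,u_d$ of $\ell$ that carry marked vertices; thus $M = \bigcup_{i=1}^d M_i$, and combining the $d$ coverings $M^*_i \covers M_i$ via the union-compatibility of $\covers$ in \Cref{obs:cover} yields $M^* \covers M$. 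Second, $\ell_i = \LCA(M_i) \in V(T_{u_i})$, because $M_i \subseteq V(T_{u_i})$ makes $\LCA(M_i)$ a (possibly improper) descendant of $u_i$; as $u_i$ is a child of $\ell$, the vertex $\ell$ is an ancestor of $\ell_i$, i.e.\ $\{\ell\} \covers \{\ell_i\}$, so transitivity gives $\{\ell\} \covers M^*_i$ for each $i$, and taking the union over $i$ gives $\{\ell\} \covers M^*$.

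I do not expect a real obstacle here: the lemma is essentially bookkeeping, confirming that $\mathcal{A}$ replaces each piece $M_i$ by a set sandwiched between $\LCA(M_i)$ and $M_i$, and that these sandwiches assemble correctly across the relevant children of $\ell$. The one point needing a moment's care is verifying $\ell_i \in V(T_{u_i})$, which is exactly what places $\ell$ above $\ell_i$ and makes the transitivity step legitimate.
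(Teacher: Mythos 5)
Your proof is correct and follows essentially the same route as the paper: handle the early-return case trivially, and otherwise apply the induction hypothesis to the recursive calls to get $\{\LCA(M_i)\} \covers M^*_i \covers M_i$, then assemble via $\{\ell\} \covers \{\LCA(M_i)\}$, transitivity, and the union-compatibility of $\covers$. Your extra remarks (the base case $f=1$, $M = \bigcup_i M_i$, and the placement of $\LCA(M_i)$ inside $T_{u_i}$) simply spell out steps the paper leaves implicit.
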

\begin{proof}
    If $M^* = \{\ell\}$ this is trivial. Otherwise, the algorithm must have executed \Cref{line:recurse}.
    By the induction hypothesis, $\{\LCA(M_i)\} \covers M^*_i \covers M_i$ for all $i=1, \dots, d$.
    As $\{\ell\} \covers \{\LCA(M_i)\}$, we deduce that
    \begin{align*}
        \{\ell\} &\covers \bigcup_{i=1}^d M^*_i = M^* \\
        &\covers \bigcup_{i=1}^d M_i = M
    \end{align*}
    as required.
\end{proof}

We now turn to prove the first item of~\Cref{thm:FLCA}, by the following \Cref{cl:fault-equivalent} and \Cref{cl:unique-minimal}.

\begin{claim}[$f$-Fault Equivalence]\label{cl:fault-equivalent}
	$M^* \sim_f M$.
\end{claim}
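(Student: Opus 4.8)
The plan is to prove $M^* \sim_f M$ by strong induction on $f$, mirroring the recursive structure of Algorithm~$\mathcal{A}$. Recall that $M \sim_f M^*$ means: for every fault set $F$ with $|F| \leq f$, we have $F \covers M \iff F \covers M^*$. We must handle the three cases in which the algorithm returns.

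First I would dispose of the base-like cases where $M^* = \{\ell\}$, i.e.\ when $\ell \in M$ or when $d > f$. If $\ell \in M$, then $\ell$ is the LCA of $M$ and lies in $M$, so for any $F$, $F \covers M \iff F \covers \{\ell\}$ by the standard LCA argument (an ancestor of every element of $M$ is precisely an ancestor of $\ell$, and $\ell$ is itself in $M$); this uses only \Cref{obs:cover}. Now suppose $d > f$ and $M^* = \{\ell\}$. The direction $F \covers M \Rightarrow F \covers \{\ell\}$ follows from $\{\ell\} \covers M$ (\Cref{cl:between-ell-and-M}) together with transitivity once we know $F \covers M^* \covers M$ — wait, more directly: if $F \covers M$ then every element of $F$ covering the $M_i$'s must, but the clean argument is: $F \covers \{\ell\}$ is equivalent to $F \cap T[s,\ell] \neq \emptyset$, and if $F \covers M$ then in particular $F$ hits every path $T[s, v]$ for $v \in M$, all of which pass through $\ell$ — hmm, that's not immediate either. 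The right argument: if $F \covers \{\ell\}$ then $F$ has an ancestor of $\ell$, hence of all of $M$, so $F \covers M$; conversely if $F \covers M$ but $F \not\covers \{\ell\}$, then no vertex of $F$ is an ancestor of $\ell$, so each $x \in F$ that covers some $v \in M$ must lie strictly inside some $T_{u_i}$, and such an $x$ can cover marked vertices from only that single subtree $T_{u_i}$; since the $d$ subtrees $T_{u_1}, \dots, T_{u_d}$ each contain a marked vertex and are disjoint, covering all of $M$ needs at least $d$ vertices of $F$, contradicting $|F| \leq f < d$.

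The main case is \Cref{line:recurse}, where $M^* = \bigcup_i M^*_i$ with $M^*_i = \mathcal{A}(T, M_i, f-d+1)$ and $2 \le d \le f$. Here the key structural fact is that the subtrees $T_{u_1}, \dots, T_{u_d}$ are pairwise disjoint and each $M_i, M^*_i \subseteq V(T_{u_i})$ (the latter by \Cref{cl:between-ell-and-M} applied recursively, since $M^*_i$ lies between $\LCA(M_i)$ and $M_i$, both in $T_{u_i}$). Fix $F$ with $|F| \leq f$. Given the disjointness, partition the "relevant" part of $F$: for each $i$, let $F_i = F \cap (V(T_{u_i}) \cup T[s, u_i])$ — these are the vertices of $F$ that can possibly cover anything in $T_{u_i}$. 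I claim $F \covers M \iff \forall i,\ F_i \covers M_i$, and likewise with $M^*$; this is just \Cref{obs:cover}(2) and (5). So it suffices to show, for each $i$, that $F_i \covers M_i \iff F_i \covers M^*_i$. To invoke the induction hypothesis (which gives $M_i \sim_{f-d+1} M^*_i$), I need $|F_i| \leq f - d + 1$. This is where disjointness pays off: if $j \neq i$, any vertex of $F$ lying in $T[s,u_i]$ that is \emph{not} an ancestor of $\ell$ cannot also be in $T[s,u_j]$ or $T_{u_j}$; more carefully, writing $F' = F \setminus T[s,\ell]$, the sets $F' \cap (V(T_{u_i}) \cup T[u_i,\ell])$ over $i = 1,\dots,d$ are pairwise disjoint and nonempty is not required — rather, if $F \covers M$ then each such set is nonempty, so $|F| \geq d + |F_i \cap (\text{path to } \ell)|$-type counting gives $|F_i| \leq |F| - (d-1) \leq f - d + 1$. (If $F \not\covers M$ one argues symmetrically that also $F \not\covers M^*$, using that failing to cover some $M_i$ propagates upward.)

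I expect the main obstacle to be the careful bookkeeping in the recursive case: precisely defining the "local" fault set $F_i$ so that (a) $F_i \covers M_i \iff F \covers M_i$ and similarly for $M^*_i$, and (b) $\sum_i$ of the "non-shared" parts of the $F_i$ plus the $d-1$ forced vertices respects $|F| \le f$, so that each $|F_i| \le f-d+1$ and the induction hypothesis applies. The shared prefix $T[s,\ell]$ is harmless — a vertex there is an ancestor of all of $\ell, M, M^*$ simultaneously — so the accounting should really only concern vertices below $\ell$, and the disjointness of the $T_{u_i}$ makes the count work out to exactly the threshold $f - d + 1$ that the algorithm passes to its recursive calls. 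Everything else is an application of \Cref{obs:cover}.
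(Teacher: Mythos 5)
Your proposal is correct and takes essentially the same route as the paper: reduce to the case $F \cap T[s,\ell] = \emptyset$, restrict $F$ to the child subtrees $T_{u_i}$, use their disjointness to get $|F_i| \leq |F| - (d-1) \leq f-d+1$, and apply the induction hypothesis at parameter $f-d+1$. The only differences are cosmetic: the paper obtains the direction $F \covers M^* \Rightarrow F \covers M$ immediately from \Cref{cl:between-ell-and-M} (so it never needs your ``symmetric'' bookkeeping, which would additionally require noting $M^*_i \neq \emptyset$ so that each $F_i$ is forced to be nonempty), and it does not case-split on which line the algorithm returned, instead deducing from $F \covers M$ and $F \not\covers \{\ell\}$ that \Cref{line:recurse} must have executed.
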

\begin{proof}
    Let $F \subseteq V(T)$ with $|F| \leq f$.
    We should prove that $F \covers M^* \iff F \covers M$.

    \begin{description}
        \item[$(\Longrightarrow)$]
        Follows immediately from \Cref{cl:between-ell-and-M}.

        \item[$(\Longleftarrow)$]
        If  $F \covers \{\ell\}$ then $F \covers M^*$ by \Cref{cl:between-ell-and-M} and we are done.
        Assume now that $F \not\covers \{\ell\}$, i.e.\ $F \cap T[s,\ell] = \emptyset$.
        As $F \covers M$, it follows that $\ell \notin M$.
        Hence, the condition of \Cref{line:l-in-M} is not satisfied, and \Cref{line:children-of-l} must have been executed.
        Each subtree $T_{u_i}$ must intersect $F$, since otherwise, the subset $M_i$ of $M$ could not have been covered by $F$ (because $F \cap T[s,\ell] = \emptyset$).
        Because $T_{u_1}, \dots, T_{u_d}$ are disjoint, we see that $d \leq |F| \leq f$.
        This means that the condition of \Cref{line:d-vs-f} is not satisfied, hence \Cref{line:recurse} must have been executed.
        
        Let $F_i = F \cap V(T_{u_i})$.
        Note that $F_i \covers M_i$, because $F \covers M_i$ and all vertices in $F - F_i$ lie outside of $T_{u_i} \cup T[s,u_i]$.
        Also, each of the $d-1$ disjoint subtrees $\{T_{u_j} \}_{j \neq i}$ contains one vertex from $F-F_i$, and thus $|F_i| \leq |F| - (d-1) \leq f-d+1$.
        Since $M_i \sim_{f-d+1} M^*_i$ holds by the induction hypothesis, we obtain that $F_i \covers M^*_i$. We conclude that
        \[
            F \covers \bigcup_{i=1}^d F_i \covers \bigcup_{i=1}^d M^*_i = M^*.
        \]
    \end{description}
\end{proof}

\begin{claim}[Minimality and Uniqueness]\label{cl:unique-minimal}
    If $N \subseteq V(T)$ and $N \sim_f M$, then $|N| \geq |M^*|$, and equality holds iff $N = M^*$.
\end{claim}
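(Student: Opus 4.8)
Following the pattern of the earlier claims, I would argue by strong induction on $f$, splitting into the case where $\mathcal{A}$ returned $\{\ell\}$ (when \Cref{line:l-in-M} or \Cref{line:d-vs-f} applies) and the recursive case (\Cref{line:recurse}); the first case already subsumes the base $f=1$, since then either $\ell \in M$ or $d \geq 2 > f$.

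For the case $M^* = \{\ell\}$, I would first observe that every $N \sim_f M$ is nonempty — since $M \neq \emptyset$ we have $\emptyset \not\covers M$, so taking $F = \emptyset$ in the definition of $\sim_f$ forces $\emptyset \not\covers N$ — which already yields $|N| \geq 1 = |M^*|$. For the equality part I would take $N = \{v\}$ and pin down $v$: testing $F = \{v\}$ shows $\{v\} \covers M$, so $v$ is an ancestor of $\ell = \LCA(M)$, and testing $F = \{\ell\}$ (which covers $M$ since $\ell$ is an ancestor of all of $M$) shows $\{\ell\} \covers \{v\}$, so $\ell$ is an ancestor of $v$; hence $v = \ell$.

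In the recursive case ($\ell \notin M$, $2 \leq d \leq f$, $M^* = \bigcup_{i=1}^d M^*_i$ with $M^*_i = \mathcal{A}(T,M_i,f-d+1)$, a disjoint union since $M^*_i \subseteq V(T_{u_i})$), the heart of the argument is a sub-claim: writing $N_i := N \cap V(T_{u_i})$ for a given $N \sim_f M$, one has $N = \bigcup_{i=1}^d N_i$ and $N_i \sim_{f-d+1} M_i$ for each $i$. For $N = \bigcup_i N_i$ I would test $F = \{u_1,\dots,u_d\}$, which is legal ($d \leq f$) and covers $M$ (as $\ell \notin M$, every marked vertex lies in some $T_{u_j}$), so $F \covers N$ and every vertex of $N$ lies in some $T_{u_j}$. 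For $N_i \sim_{f-d+1} M_i$, given $G$ with $|G| \leq f-d+1$ I would first use \Cref{obs:cover} to localize to $T_{u_i}$ — if $G$ meets $T[s,\ell]$ it covers all of $V(T_{u_i}) \supseteq N_i \cup M_i$ and there is nothing to check, and otherwise covering of subsets of $V(T_{u_i})$ by $G$ coincides with covering by $G_i := G \cap V(T_{u_i})$ — and then pad $G_i$ to $F := G_i \cup \{u_j : j \neq i\}$, of size $\leq (f-d+1)+(d-1) = f$; since each $u_j$ ($j \neq i$) covers exactly $M_j$, $N_j$ and nothing inside $T_{u_i}$, this gives $F \covers M \iff G_i \covers M_i$ and $F \covers N \iff G_i \covers N_i$, so applying $N \sim_f M$ to $F$ yields $G_i \covers M_i \iff G_i \covers N_i$. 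Granting the sub-claim, the induction hypothesis (parameter $f-d+1 \leq f-1$) gives $|N_i| \geq |M^*_i|$ with equality iff $N_i = M^*_i$; summing over the disjoint $T_{u_i}$ gives $|N| = \sum_i |N_i| \geq \sum_i |M^*_i| = |M^*|$, and equality forces $N_i = M^*_i$ for all $i$, i.e.\ $N = M^*$.

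I expect the main obstacle to be the sub-claim $N_i \sim_{f-d+1} M_i$ — specifically, getting the fault-budget bookkeeping exactly right (the budget $f-d+1$ must survive padding by the $d-1$ sibling vertices $u_j$, landing on precisely $f$) and justifying carefully, via \Cref{obs:cover}, that covering tests against subsets of $V(T_{u_i})$ localize to that subtree no matter where the rest of the fault-set lies. The remaining bookkeeping is routine given \Cref{obs:cover}, \Cref{cl:between-ell-and-M}, and the induction hypothesis.
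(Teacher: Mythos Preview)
Your proposal is correct and follows essentially the same route as the paper: strong induction on $f$, the two cases $M^*=\{\ell\}$ versus recursion, the decomposition $N=\biguplus_i N_i$ via the test set $\{u_1,\dots,u_d\}$, and the key sub-claim $N_i\sim_{f-d+1}M_i$ proved by padding with the siblings $\{u_j:j\neq i\}$. The one minor difference is that your localization step (splitting off the case $G\cap T[s,\ell]\neq\emptyset$ and replacing $G$ by $G_i$) is unnecessary---the paper pads an arbitrary $F'$ directly to $F'\cup(U\setminus\{u_i\})$ and gets the chain $F'\covers M_i \iff F'\cup(U\setminus\{u_i\})\covers M \iff F'\cup(U\setminus\{u_i\})\covers N \iff F'\covers N_i$ straight from \Cref{obs:cover}(5), which streamlines exactly the part you flagged as the main obstacle.
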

\begin{proof}
    Consider first the case where $M^* = \{\ell\}$. 
    Then, as $M$ is non-empty, $N$ also cannot be empty, i.e.\ $|N| \geq 1 = |M^*|$.
    If equality holds, then $N$ contains a single vertex $v$.
    Trivially, $\{v\} \covers N$.
    Since $N \sim_f M$ (and $f \geq 1$) we obtain $\{v\} \covers M$.
    Thus, $v$ is a common ancestor of all vertices in $M$, so it must be an ancestor of $\ell = \LCA(M)$.
    However, it cannot be a strict ancestor of $\ell$, as then we would have $\{\ell\} \covers M$ and $\{\ell\} \not\covers \{v\} = N$, contradicting the assumption that $N \sim_f M$.
    Thus $v = \ell$, so $N = M^*$.
    	
    It remains to consider the case where $M^* \neq \{\ell\}$.
    Then the algorithm must have executed \Cref{line:recurse}, and the conditions of \Cref{line:l-in-M} and \Cref{line:d-vs-f} were not satisfied.
    Hence, $U =\{u_1,\dots,u_d\} \covers M$ and $d \leq f$.
    Since $N \sim_f M$ we get that $U \covers N$.
    Therefore, letting $N_i = N \cap V(T_{u_i})$,
    we have 
    $
    N = \biguplus_{i=1}^d N_i
    $
    (where $\uplus$ denotes disjoint union).
    
    We now observe that, for any $F' \subseteq V(T)$ with $|F'| \leq f-d+1$, it holds that
    \[
	F' \covers M_i \iff F' \cup (U - \{u_i\}) \covers M \iff F' \cup (U - \{u_i\}) \covers N \iff F' \covers N_i 
    \]
    where the middle `$\Longleftrightarrow$' holds as $|F' \cup (U - \{u_i\})| \leq f$ and $N \sim_f M$.
    This means that $N_i \sim_{f-d+1} M_i$.
    The induction hypothesis thus yields that $|N_i| \geq |M^*_i|$, and equality holds iff $N_i = M^*_i$.
    We deduce that
	\begin{equation}\label{eq:N-lower-bound}
		|N| = \sum_{i=1}^d |N_i| \geq \sum_{i=1}^d |M^*_i|.
	\end{equation}
    Now, as $M_i \subseteq V(T_{u_i})$, it follows (by \Cref{cl:between-ell-and-M} for $M^*_i = \mathcal{A}(T,M_i,f-d+1)$) that $M^*_i \subseteq V(T_{u_i})$.
    Hence, the union returned in \Cref{line:recurse} is disjoint, i.e.\ $M^* = \biguplus_{i=1}^d M^*_i$.
    Thus, the right-hand-side of~\Cref{eq:N-lower-bound} is equal to $|M^*|$, so we have shown that $|N| \geq |M^*|$.
    Furthermore, in light of \Cref{eq:N-lower-bound}, $|N|=|M^*|$ can hold only if for all $i=1,\dots,d$ we have $|N_i| = |M^*_i|$, and thus also $N_i = M^*_i$.
    So in this case,
    \[
        N = \biguplus_{i=1}^d N_i = \biguplus_{i=1}^d M^*_i = M^*
    \]
    as required.	
\end{proof}

Next, we prove the second item of~\Cref{thm:FLCA}.

\begin{claim}[Size Bound]\label{cl:tight-size-bound}
    $|M^*| \leq 2^{f-1}$. Further, for some choice of $T$ and $M$, equality holds.
\end{claim}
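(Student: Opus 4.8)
The plan is to prove both parts by strong induction on $f$, mirroring the recursive structure of Algorithm $\mathcal{A}$. For the upper bound $|M^*| \le 2^{f-1}$: when $f = 1$ we have $2^{f-1} = 1$, and the algorithm always returns the singleton $\{\ell\}$ — either because $\ell \in M$ (\Cref{line:l-in-M}), or because $\ell \notin M$ forces $d \ge 2 > 1 = f$, triggering \Cref{line:d-vs-f}. For the inductive step with $f \ge 2$: if $M^* = \{\ell\}$ we are trivially done; otherwise \Cref{line:recurse} was executed, so $M^* = \biguplus_{i=1}^d M^*_i$ with $2 \le d \le f$ and $M^*_i = \mathcal{A}(T, M_i, f-d+1)$. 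Since $1 \le f-d+1 \le f-1$, the induction hypothesis gives $|M^*_i| \le 2^{f-d}$, hence $|M^*| = \sum_{i=1}^d |M^*_i| \le d \cdot 2^{f-d}$. The proof then closes by invoking the elementary inequality $d \le 2^{d-1}$, valid for every integer $d \ge 1$, which yields $d \cdot 2^{f-d} \le 2^{d-1} \cdot 2^{f-d} = 2^{f-1}$.

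For tightness, I would take $T$ to be the complete binary tree of height $f-1$ — so it has exactly $2^{f-1}$ leaves, all at depth $f-1$ from the root $s$ — and let $M$ be the set of all its leaves. The claim, again proved by induction on $f$, is that $\mathcal{A}(T,M,f) = M$. When $f = 1$, $T$ is a single vertex that is both root and leaf, so $\ell = \LCA(M) = s \in M$ and the algorithm returns $\{s\} = M$. For $f \ge 2$, $\ell = \LCA(M) = s$ is not a leaf, so $s \notin M$; its two children each have marked descendants, giving $d = 2 \le f$, so the test in \Cref{line:d-vs-f} fails and \Cref{line:recurse} runs with parameter $f - d + 1 = f-1$. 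Each child's subtree is a complete binary tree of height $f-2$ whose leaf set is the corresponding $M_i$, so by the induction hypothesis $\mathcal{A}(T, M_i, f-1)$ returns all $2^{f-2}$ leaves of that subtree; taking the disjoint union over $i \in \{1,2\}$ recovers all $2^{f-1}$ leaves of $T$. Since \Cref{cl:fault-equivalent} and \Cref{cl:unique-minimal} already identify $M^*$ with $\FLCA(M,f)$, this establishes $|\FLCA(M,f)| = 2^{f-1}$ for this $T$ and $M$.

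I do not expect a serious obstacle: the only points needing care are the elementary bound $d \le 2^{d-1}$ and making sure the base cases and parameter arithmetic in the extremal construction line up so that the algorithm is forced down the $d = 2$ branch at every recursion level. If desired, one could present the extremal example even more concretely as $f-1$ nested binary branchings with the marked vertices at the bottom, which makes the inductive unrolling fully transparent.
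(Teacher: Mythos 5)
Your proof is correct and follows essentially the same route as the paper: the same induction with the bound $d\cdot 2^{(f-d+1)-1}\leq 2^{f-1}$ (equivalently $d\leq 2^{d-1}$), and the same extremal example of a full binary tree with all leaves marked (you fix the height to exactly $f-1$, a special case of the paper's ``height at least $f-1$'' construction, and spell out the verification the paper leaves as ``easy to verify'').
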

\begin{proof}
    If $M^* = \{\ell\}$ then the inequality is trivial.
    Otherwise, \Cref{line:recurse} must have been executed, so $M^* = \bigcup_{i=1}^d M^*_i$. Using the induction hypothesis, we obtain
    \[
    |M^*| \leq \sum_{i=1}^d |M^*_i| \leq d \cdot 2^{(f-d+1)-1} = \frac{d}{2^{d-1}} \cdot 2^{f-1} \leq 2^{f-1}.
    \]
    For a case where equality holds, consider $T$ being a full binary of height at least $f-1$, with all of its leaves marked as $M$.
    Then it is easy to verify that $M^* = \mathcal{A}(T,M,f)$ is the set of all $2^{f-1}$ vertices with depth $f-1$.
\end{proof}

Finally, we prove the third and last item of~\Cref{thm:FLCA}.

\begin{claim}[Implementation]\label{cl:runtime}
    The tree $T$ can be preprocessed in $O(n)$ time so that queries $(M,f)$ can be answered with $\FLCA(M,f)$ within $O(|M|)$ time.
\end{claim}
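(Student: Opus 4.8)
The plan is to avoid re-running the LCA-and-partition step of algorithm $\mathcal{A}$ from scratch inside the recursion, since a literal implementation does exactly that and costs $\Theta(f\cdot|M|)$ in general (and $\Theta(|M|^2)$ on some inputs, e.g.\ a path with many pendant leaves). Instead I would contract the relevant part of $T$ into the \emph{auxiliary (virtual) tree} $\hat T$ of $M$: the tree whose vertex set is $M$ together with all pairwise LCAs of vertices of $M$, with ancestor--descendant relations inherited from $T$. This is a classical object: $|V(\hat T)| \leq 2|M|-1$, its root is $\LCA(M)$, and — given a standard $O(n)$-time, $O(1)$-query LCA data structure for $T$ and a DFS preorder numbering — it can be built in $O(|M|)$ time by the usual stack-based construction, together with a flag on each node of $\hat T$ recording whether it belongs to $M$. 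Note every node of $\hat T$ is a genuine vertex of $T$.

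Next I would show that algorithm $\mathcal{A}$ can be simulated \emph{directly on $\hat T$}. A recursive call $\mathcal{A}(T, M_{\mathrm{cur}}, f_{\mathrm{cur}})$ arising during the run has $\ell = \LCA(M_{\mathrm{cur}})$ equal to some node $\hat r$ of $\hat T$, and $M_{\mathrm{cur}} = M \cap V(T_{\hat r})$ is exactly the set of marked nodes in the subtree $\hat T_{\hat r}$, which is itself the auxiliary tree of $M_{\mathrm{cur}}$; so the test ``$\ell \in M$'' becomes ``$\hat r$ is flagged''. If $\hat r$ is unflagged, a short virtual-tree argument shows that the children of $\hat r$ in $\hat T$ are in bijection with the children $u_1,\dots,u_d$ of $\ell$ in $T$ whose subtrees meet $M_{\mathrm{cur}}$ — the $\hat T$-child lying in $T_{u_i}$ is $\LCA(M_{\mathrm{cur}} \cap V(T_{u_i}))$, and distinct $\hat T$-children lie in distinct $T_{u_i}$'s because $\hat r$ is their LCA in $T$. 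Hence $d$ is simply the number of children of $\hat r$ in $\hat T$, read off in $O(\deg_{\hat T}(\hat r))$ time; if $d > f_{\mathrm{cur}}$ the call emits $\{\hat r\}$, otherwise it recurses into each child-subtree of $\hat r$ in $\hat T$ with parameter $f_{\mathrm{cur}}-d+1$ (and $1 \leq f_{\mathrm{cur}}-d+1 \leq f_{\mathrm{cur}}-1$ since $2 \leq d \leq f_{\mathrm{cur}}$). The emitted nodes, collected together, form the output, which lies in $V(T)$ and — since this faithfully reproduces $\mathcal{A}(T,M,f)$ — equals $\FLCA(M,f)$ by \Cref{cl:fault-equivalent,cl:unique-minimal}.

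For the running time: the roots $\hat r$ of the recursive calls are pairwise distinct nodes of $\hat T$ (within one level they lie in disjoint subtrees of $T$; a deeper call's root is a strict descendant of a shallower one's), and the call rooted at $\hat r$ spends $O(1 + \deg_{\hat T}(\hat r))$ time plus $O(1)$ to append $\hat r$ to the output when it terminates. Summing over all nodes of $\hat T$ gives $O(|V(\hat T)| + |E(\hat T)|) = O(|M|)$, on top of the $O(|M|)$-time construction of $\hat T$ and the one-time $O(n)$ preprocessing of $T$; running preprocessing followed by a single query also gives the $O(n)$-from-scratch bound in the third item of \Cref{thm:FLCA}. I expect the main point to get right is the virtual-tree bookkeeping in the second paragraph — verifying that $\hat T_{\hat r}$ is indeed the auxiliary tree of $M \cap V(T_{\hat r})$ and that $\hat T$-children correspond exactly to the marked child-subtrees of $\ell$ in $T$ — since this is precisely what makes the simulation equivalent to $\mathcal{A}$; everything else is routine.
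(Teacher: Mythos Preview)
Your approach is sound and takes a genuinely different route from the paper. The paper implements each level of $\mathcal{A}$ directly: it computes $\ell=\LCA(M)$ by a chain of pairwise $\lca$ queries, then uses \emph{level-ancestor} queries $\anc(v,\depth(\ell)+1)$ together with a persistent lookup table on $V(T)$ (switch bits plus linked-list pointers, cleaned up after each call) to bucket the elements of $M$ by the child-subtree of $\ell$ they fall into, and recurses on each bucket. You instead build the auxiliary (virtual) tree $\hat T$ of $M$ once and simulate $\mathcal{A}$ entirely on $\hat T$, reading off $d$ as the $\hat T$-degree of the current root and recursing into $\hat T$-subtrees; level ancestors and the global table are never needed.

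What each buys: your observation that the literal recursion costs $\Theta(\min(f,|M|)\cdot|M|)$ on a caterpillar is correct, and your simulation cleanly avoids this by charging each recursive call to a distinct node of $\hat T$, so the traversal is $O(|V(\hat T)|)=O(|M|)$. On the other hand, the paper's bucketing step does not require $M$ in any particular order, whereas the ``usual stack-based construction'' of $\hat T$ needs $M$ sorted by DFS preorder; your stated $O(|M|)$ for building $\hat T$ therefore hides a sort of $|M|$ integers in $[1,n]$, which is not $O(|M|)$ by standard means. This is a minor point---state $O(|M|\log|M|)$, or assume $M$ is given in preorder---but you should make it explicit. The bijection you flag as the crux, that the $\hat T$-children of an unmarked $\hat r=\LCA(M_{\mathrm{cur}})$ are exactly the vertices $\LCA\big(M_{\mathrm{cur}}\cap V(T_{u_i})\big)$ for the marked child-subtrees $T_{u_i}$ of $\ell$, is indeed the key step, and your argument for it is correct.
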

\begin{proof}
    
    

Algorithm $\mathcal{A}$ can be implemented in $O(n)$ time by dynamic programming on the tree $T$.
As an improvement, we show after $O(n)$ time for preprocessing $T$, one can answer queries $(M,f)$ by computing $\FLCA(M,f)$ within $O(|M|)$ time.
To this end, We build two classical data structures for (pairwise) LCA and level ancestor queries:
\begin{description}
    \item[$\lca(u,v)$]: returns the lowest common ancestor of vertices $u$ and $v$
    \item[$\anc(u,l)$]: returns the ancestor $v$ of $u$ such that $\depth(v) = l$ (or $\undefined$ if $\depth(u) < l$)
\end{description}
This requires $O(n)$ time, and queries can be answered in $O(1)$ time~\cite{BenderF00,BenderF04}.
Additionally, we create a lookup table $D$ for vertices, where $D[v]$ stores a ``switch bit'' initialized to zero, and a pointer initialized to a null value.
This concludes the preprocessing, taking $O(n)$ time.
Given a query $(M,f)$ with $M = \{v_1, \dots, v_{|M|}\}$, we now explain the implementation details for executing $\mathcal{A}(T,M,f)$ in $O(|M|)$ time.

    Computing $\ell$ takes $O(|M|)$ time by initializing $\ell \gets \lca(v_1, v_2)$ and updating $\ell \gets \lca(\ell, v_i)$ for $i= 3, \dots, |M|$.
    Computing $u_1, \dots, u_d$ and $M_1, \dots M_d$ is most of the work.
    We aim to store $u_1, \dots, u_d$ in a linked list $L$, and to store each $M_i$ in a linked list pointed to from $D[u_i]$.
    We sequentially process each $v_i \in M$ as follows.
    First, we compute $u \gets \anc(v_i, \depth(\ell)+1)$.
    Next, we do a lookup to $D[u]$.
    If the switch bit is $0$, we (i) change it to $1$, (ii) change the pointer of $D[u]$ to the head of a new linked list of length $1$ that stores $v_i$, and (iii) add $u$ to the linked list $L$.
    If the switch bit is $1$, we just add $v$ to the linked list pointed by $D[u]$ (that was created when the switch was turned on).
    Overall, this takes $O(|M|)$ time.
    
    To invoke the recursive calls,
    We scan the linked list $L$ containing $u_1, \dots, u_d$ (where this is their order in $L$).
    When processing $u_i$, we ``clean up'' the lookup table entry $D[u_i]$ by copying the pointer to the linked list containing $M_i$, then reverting $D[u_i]$ to its original state (switch bit $0$ and null pointer).
    We can now execute the recursive calls $\mathcal{A}(T,M_i,f-d+1)$.
    This ensures that the lookup table $D$ returns to its cleaned state after the current query $(M,f)$ is answered, allowing us to treat any future query $(M',f')$ in the same manner.
\end{proof}

\section*{Acknowledgments}
I am grateful to Merav Parter for stimulating discussions and helpful comments.

\bibliography{references}
\bibliographystyle{alpha}

\end{document}